\newtheorem{theorem}{Theorem}
\newtheorem{definition}{Definition}
\newtheorem{lemma}{Lemma}
\newtheorem{proposition}{Proposition}
\newtheorem{remark}{Remark}
\newtheorem{assum}{Assumption}
\newcommand{\rr}{\mathbb{R}}
\newcommand{\summ}{\sum\limits}
\newcommand{\ma}{\max\limits}
\newcommand{\liminff}{\liminf\limits}
\newcommand{\limsupp}{\limsup\limits}
\newcommand{\limm}{\lim\limits}
\begin{document}
\title{FDI versus R\&D in an endogenous growth model\footnote{The authors thank anonymous referees for their insightful comments and suggestions.}}

\author{Thanh Tam NGUYEN-HUU\thanks{ORCID ID: \href{https://orcid.org/my-orcid}{0000-0001-6953-9838}. EM Normandie Business School, M\'etis Lab (France). Email: tnguyenhuu@em-normandie.fr. Phone: +33 (0)2 78 34 04 61. Address: EM Normandie (campus  Havre),  20 Quai Frissard, 76600 Le Havre, France.} 
\and  Ngoc-Sang PHAM\thanks{Corresponding author. ORCID ID: \href{https://orcid.org/0000-0001-9037-3019}{0000-0001-9037-3019}. EM Normandie Business School, M\'etis Lab (France). Email: npham@em-normandie.fr.  Phone:  +33 (0)2 50 32 04 08. Address: EM Normandie (campus Caen), 9 Rue Claude Bloch, 14000 Caen, France.}}
\date{\today}

\maketitle

\begin{abstract}
We investigate the role of foreign direct investment (FDI) and research and development (R\&D) in the transitional dynamics of host countries using an optimal growth model. FDI may benefit the host country's GNP by enabling multinational enterprises to hire local workers. However, if the host country focuses solely on FDI, it may fall into a middle-income trap. Most importantly, we show that if the host country invests in R\&D, its economy can reach sustained growth. In this case, FDI benefits the host country, but only in the early stages of its development process. 
\\ \newline
{\small {\quad\textbf{Keywords:}} Optimal growth, FDI, R\&D, fixed cost, endogenous growth.}\\
\textbf{JEL Classifications}: D15, F23, F4, O3, O4.
\end{abstract}

\section{Introduction}
\label{introduction_section}
Over the past few decades, openness to the global economy and the attraction of foreign direct investment (FDI) have become major policy priorities for developing countries seeking to foster economic development. The prevailing argument is that multinational enterprises (MNEs) stimulate investment, bring advanced technologies and managerial expertise, and generate positive spillovers for domestic firms. Despite these expectations, the impact of FDI on host-country development remains ambiguous.

Overall, the empirical literature finds that FDI's effect on host-country economic growth is relatively weak \citep{cl2005, g17}, and the relationship between FDI and growth varies over time \citep{benetrix23}. More precisely, the significance of this effect depends on local conditions such as income levels, institutional quality \citep{bg20}, human capital accumulation \citep{li05}, and the development of local financial markets \citep{alfaro04, alfaro10}. For example, the impact of FDI on growth tends to be positive in countries with relatively high levels of human capital or well-developed financial systems \citep{borenz98, alfaro04}. Moreover, the relationship between FDI and income levels follows an inverted U-shape: the effect is strongest in low- to middle-income countries and diminishes as countries transition toward high-income status \citep{bg20}.\footnote{See, for example, \cite{amin14} for a comprehensive review of the literature on the FDI–growth relationship.}

Despite the large body of empirical research on the FDI–growth nexus,\footnote{Over five decades of research on FDI, the FDI–GDP (economic growth) relationship has been the most extensively studied. Indeed, 107 out of the 500 articles reviewed in \cite{pf21} analyze the impact of FDI on host-country economic growth.} theoretical analyses remain relatively scarce. This paper aims to fill this gap by examining the role of FDI along the transitional dynamics of a host economy. In particular, we address the following fundamental questions:
\begin{itemize}
\item[(i)] What is the optimal strategy for a country receiving FDI?
\item[(ii)] What role does FDI play in the host country’s development process? Can it help the host economy escape the middle-income trap and achieve sustained long-term growth?
\end{itemize}

To answer these questions, we develop an optimal growth model with FDI and endogenous growth. The host country is modeled as a small open economy producing three goods: consumption, physical capital, and new goods. All of which are freely tradable internationally. The economy consists of two agents: a representative domestic agent and an MNE. The representative agent faces three investment strategies: (i) investing in physical capital to produce consumption goods; (ii) investing in training to acquire skills and work for the MNE in exchange for wages; and (iii) investing in research and development (R\&D) to generate innovations. Successful innovations enhance the productivity of domestic firms.

First, we show, in Theorem \ref{poverty2},  that when the host country has low initial resources and weak research efficiency, it is optimal to forgo R\&D investment and focus exclusively on attracting FDI. In this case, the economy converges to a higher steady state than that without FDI.

Second, we consider a low-income country that cannot immediately invest in R\&D or new technologies due to high fixed costs. In this context, we demonstrate that if the productivity gains from new technologies are sufficiently large or if the country possesses strong R\&D potential, the optimal development strategy follows a three-stage process (see Theorems \ref{dynamic_newtech} and \ref{theorem_nofixedcots}):
\begin{itemize}
\item \textbf{Stage 1:} the country invests in training specialized workers;
\item \textbf{Stage 2:} these workers are employed by MNEs, earning higher wages and contributing to income growth and capital accumulation;
\item \textbf{Stage 3:} once income reaches a sufficient level, the country shifts toward R\&D investment, generating innovations that raise domestic firms’ total factor productivity (TFP) and enable sustained long-run growth.
\end{itemize}

Our model also shows that the country may achieve long-run growth without FDI. In this sense, FDI acts primarily as a catalyst, particularly during the early stages of development, rather than as a fundamental driver of long-term growth.

This paper makes two significant contributions to the literature. First, it advances the theoretical understanding of the mechanisms through which FDI affects economic growth. Early contributions include \cite{Findlay78}, who analyzes FDI in a dynamic framework with exogenously determined technological efficiencies and introduces a “contagion” effect whereby domestic firms’ efficiency depends on that of foreign firms.\footnote{In this framework, domestic and foreign firms’ capital stocks evolve according to a continuous-time dynamical system with exogenous parameters.} Building on this idea, \cite{WANG1990} introduces technology diffusion by modeling the host country’s human capital stock as an increasing function of the ratio of foreign to domestic capital. In a two-country model with free capital mobility and exogenous saving rates, \cite{WANG1990} shows that openness to FDI benefits the host economy.

Subsequent studies relax the assumption of exogenous saving behavior. In a continuous-time endogenous growth model with a continuum of capital-good varieties,\footnote{See, for example, \cite{romer90} and \cite{gh91}.} \cite{borenz98} model FDI as the share of product varieties produced by foreign firms. Under Cobb–Douglas production and CRRA utility, they show that the steady-state growth rate increases with the foreign share of varieties. \cite{berth00} extends this framework by endogenizing the number of varieties produced by domestic and foreign firms. Similarly, \cite{alfaro10} examine how financial market development mediates the growth effects of FDI through backward linkages, focusing on balanced growth paths and showing that FDI generates more substantial growth effects in financially developed economies.

Unlike these studies, we do not restrict attention to steady states or balanced growth paths, nor do we impose specific functional forms on preferences. Instead, we analyze the global and transitional dynamics of optimal growth paths in models with and without FDI.\footnote{\cite{np18,NguyenPham2024} examine FDI spillovers and industrial policy in two-period and exogenous growth models, respectively, but do not consider endogenous growth.} To the best of our knowledge, our paper is the first to adopt such an approach. As a result, our findings are more robust and provide deeper insights into optimal development strategies over time, an aspect largely absent from the existing literature.

Second, our paper contributes to the literature on optimal growth with thresholds \citep{azariadis1990threshold, levan09, levan10, lv16} and increasing returns \citep{romer86, jm90, kamiroy07}. Our contribution lies in highlighting the role of FDI in helping economies overcome early-stage development thresholds by supporting capital accumulation. However, we show that long-run growth ultimately depends not on FDI itself, but on domestic conditions, particularly innovation capacity and R\&D efficiency. From a technical perspective, our analysis is far from trivial due to the coexistence of domestic and foreign firms, which prevents the direct application of standard methods such as those used in \cite{levan09} and \cite{levan10}. Moreover, in the presence of thresholds and possibly increasing returns to scale, the payoff function is non-smooth and non-convex.

The remainder of the paper is organized as follows. Section \ref{structure} presents the endogenous growth model with FDI. Section \ref{mainresult} analyzes the interaction between FDI, R\&D, and host-country growth. Section \ref{conclusion} concludes. Formal proofs are relegated to  Appendix \ref{appendix-proof}.

\section{A growth model with FDI and R\&D}\label{structure}

Let us consider a small open economy producing three goods: a consumption good, physical capital, and a so-called \textit{new good}. The consumption good is chosen as the numéraire. The price of physical capital, expressed in units of the consumption good, is exogenous and denoted by $p$.

In each period, there is a representative MNE in the host country. The MNE produces the new good using two inputs: physical capital and specific labor. We assume that no domestic firms operate in this sector.

At each date $t$, the foreign firm (without market power) chooses the quantities of physical capital $K_{e,t}$ and specific labor $L^D_{e,t}$ to maximize its profit:
\begin{eqnarray} \label{e_MNE}(F_t): \quad \pi_{e,t}=&&\ma_{K_{e,t},L^D_{e,t}\geq 0} \Big[p_nF^e_t(K_{e,t},L^D_{e,t})-pK_{e,t}-w_tL^D_{e,t}\Big]
\end{eqnarray}
where $p_n$ is the exogenous price (in terms of consumption good) of the new good, and $w_t$ is the endogenous wage rate. The production function is defined by $F^e_t(K,L)= A_eK^{\alpha_e}L^{1-\alpha_e}$ $\forall (K,L)\in \rr_+^2$, where $\alpha_e\in (0,1)$ and $A_e>0$,

The host country is populated by a representative agent who treats prices and wages as given and chooses allocations to maximize the population's intertemporal welfare. At each date $t$, this agent has three investment options.  First, he(she) can invest $K_{c,t+1}$ units of physical capital to produce $A_c K_{c,t+1}^{\alpha}$ units of the consumption good in period $t+1$, where $\alpha\in(0,1)$. Second, he(she) invest $H_{t+1}$ units of the consumption good in training to generate $A_h H_{t+1}^{\alpha_h}$ units of specific labor, where $\alpha_h\in(0,1)$. This labor is supplied to the MNE, yielding a total wage income of $w_{t+1} A_h H_{t+1}^{\alpha_h}$. Third, he(she) can also invest $N_{t+1}$ units of the consumption good in R\&D at period $t$ to create new technology. This generates $b N_{t+1}^{\sigma}$ units of new technology in period $t+1$, where $b>0$ captures the efficiency of the research process and $\sigma\in (0,1]$. New technology improves productivity in the consumption sector only if investment exceeds a fixed threshold, that is, if $bN_{t+1}^{\sigma}>\bar{x}$, where $\bar{x}\geq 0$. In this case, the productivity becomes $$A_c+a(bN_{t+1}^{\sigma}-\bar{x})$$ where $a>0$ measures the effectiveness of new technology.\footnote{R\&D could be introduced in other ways, for instance, $A_c+\gamma((N_{t+1}-N^*)^+)^{\sigma}$. However, our main results remain qualitatively unchanged.}

The representative agent maximizes the intertemporal utility $\summ_{t=0}^{+\infty} \beta^t u(c_{t})$ 
subject to
\begin{subequations}\label{budgetconstraints}
\begin{align}
 c_{t}+pK_{c,t+1}+N_{t+1}+H_{t+1}&\leq \Big(A_c+a(bN_t^{\sigma}-\bar{x})^+\Big)K_{c,t}^{\alpha}+w_tL_{e,t}\\
 L_{e,t}&\leq A_hH_t^{\alpha_h},
\end{align}
\end{subequations}
and $c_{t},K_{c,t}, H_t, L_{e,t}, N_t\geq 0$ $\forall t$. Here, $\beta\in (0,1)$ is a rate of time preference, while $u$ is the instantaneous utility function.

We now introduce the notion of equilibrium.

\begin{definition}\label{def} An intertemporal equilibrium is a sequence $ (c_t,K_t,H_t, N_t, L_{e,t}, L^D_{e,t}, K_{e,t}^D, w_t)_{t=0}^{\infty}$ satisfying three conditions: (i) Given $(w_t)_{t=0}^{\infty}$, the allocation $(c_t,K_t,H_t, N_t, L_{e,t})_{t=0}^{\infty}$ solves the representative agent's problem, (ii) Given $w_t$, the allocation $( L^D_{e,t}, K_{e,t}^D)$ solves problem $(F_t)$, (iii) The labor market clears: $L^D_{e,t}=L_{e,t}$.
\end{definition}

We impose in our paper standard assumptions.
\begin{assum}\label{assumption_basic}
The utility function $u$ is in $C^1$, strictly increasing, concave, and $u'(0)=\infty$.  Technology parameters satisfy $A_c>0, A_h>0, \alpha \in (0,1), \alpha_h\in (0,1), \sigma\in (0,1]$. Initial conditions: $N_{0}=0$ while $K_{c,0}, L_{e,0}>0$. 
 \end{assum}
 Without an explicit statement to the contrary, we also require the following assumption.
\begin{assum}\label{assumption_rd}
$\sigma\in (0,1)$ and $a\bar{x}>A_c$ (this means that the fixed cost $\bar{x}$ is not too low). 
 \end{assum}
 
We first derive the equilibrium wage.
\begin{lemma}\label{wage}Under Assumption \ref{assumption_basic}, in equilibrium we have
\begin{align}\label{wt}
w_t=w\equiv\Big(\alpha_e^{\alpha_e}(1-\alpha_e)^{1-\alpha_e}\frac{p_nA_e}{p^{\alpha_e}}\Big)^{\frac{1}{1-\alpha_e}} \text{ } \forall t.
\end{align} 
\end{lemma}
\begin{proof}See Appendix \ref{appendix-proof}.\end{proof}
Note that the equilibrium wage depends not only on the MNE's productivity $A_e$ but also on the prices of physical capital and of new goods, $p$ and $p_n$.

Define the aggregate savings (and investment) of the country as
$$S_{t+1}=pK_{c,t+1}+N_{t+1}+H_{t+1}.$$ 
In our framework, it equals the aggregate investment. Using Definition \ref{def} and Lemma \ref{wage}, we obtain the relationship between the equilibrium path and the solution of an optimal growth model.
\begin{lemma}[equilibrium and optimal growth]Under Assumption \ref{assumption_basic}, the equilibrium investment $S_t$ is part of the solution to the following optimal growth problem.
\begin{eqnarray} (P'): &\ma_{(c_{t},S_{t+1})_{t=0}^{+\infty}} \Big[\summ_{t=0}^{+\infty} \beta^t u(c_{t})\Big] \text{ subject to: }  c_t, S_t\geq 0, \text{ } 
 c_{t}+S_{t+1}\leq G(S_t) \text{ } \forall t\geq 1,
\end{eqnarray}
 and $c_0+S_1\leq X_0,$ where $X_0\equiv A_cK_{c,0}^{\alpha}+w_0L_{e,0}$ and $G(S)$ is defined by 
\begin{subequations}\label{problemGS}
\begin{align}
\label{n1}(G_S):  G(S)&\equiv \ma_{K_{c}, N, H}\Big\{g(K_c,N,H): pK_c+N+H\leq S; K_{c}, N, H\geq 0 \Big\},\\
\text{where }\label{n2}& g(K_c,N,H)\equiv \Big(A_c+a(bN^{\sigma}-\bar{x})^+\Big)K_{c}^{\alpha}+wA_hH^{\alpha_h}.
\end{align}
\end{subequations}
\end{lemma}
According to this result, we will focus on the optimal growth problem $(P')$. Notice that the function $G(\cdot): \rr_+\to \rr_+$ is continuous, strictly increasing and 
$G(0)=0$. However, it may be neither concave nor smooth.

\begin{remark}[No FDI and no R\&D] In the absence of both the MNE and R\&D, we recover an economy without FDI. In this case, the representative agent's problem reduces to the standard Ramsey optimal growth model with the budget constraint: $c_{t}+pK_{c,t+1}\leq A_cK_{c,t}^{\alpha}$ $\forall t$. In this case, $\lim_{t\rightarrow \infty}S_t={S}_a$, where ${S}_a$ is defined by ${S}_a^{1-\alpha}=\alpha\beta A_c/p^{\alpha}$.
\end{remark}

Consider now the case with FDI but without R\&D.  We define the function $F:\rr_+\to\rr_+$ by $
F(S)\equiv \ma_{pK_{c}+H\leq S, K_{c}\geq 0, H\geq 0}\{A_cK_{c}^{\alpha}+w A_hH^{\alpha_h} \}$ $\forall S\geq 0$. The representative agent's problem becomes 
\begin{align}
 (P_1'):  &\ma_{(c_{t},S_{t+1})_{t=0}^{+\infty}} \Big[\summ_{t=0}^{+\infty} \beta^t u(c_{t})\Big] 
\text{ subject to }c_t,S_t\geq 0,   c_{t}+S_{t+1}\leq F(S_t).
\end{align}

Since $ \alpha,\alpha_h\in (0,1)$, we can prove the following result.
\begin{lemma}\label{function_F}
The function $F$ is strictly increasing, strictly concave, continuously differentiable and satisfies the Inada condition $\lim_{x\to 0}F'(x)=\infty$. 
\end{lemma}

 Using the standard argument in dynamic programming, we obtain the following result. 
\begin{proposition}[with FDI but without R\&D]\label{noRD} Assume that the MNE is present, but no investment in R\&D occurs in the host country. Then $S_t$ converges to ${S}_b$, where $S_b$ is uniquely defined by 
\begin{align}\label{Sb}
\beta F'(S_b)=1.
\end{align}  Moreover, $S_b$ increases in $A_c, w,A_h$, and  $S_b>S_a$.   
 
In particular, if $\alpha=\alpha_h$, then we can obtain an explicit form of $S_b$ as
\begin{align}\label{Sba}
{S}_b^{1-\alpha}=\alpha\beta A \text{ where }
A\equiv \big((\dfrac{A_c}{p^{\alpha}})^{\frac{1}{1-\alpha}} +(w A_h)^{\frac{1}{1-\alpha}}  \big)^{1-\alpha}.
\end{align}
    
\end{proposition}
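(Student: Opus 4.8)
The plan is to read $(P_1')$ as a one-dimensional concave Ramsey problem with reduced-form production function $F$, obtain global convergence from standard dynamic programming, and then extract every quantitative claim about $S_b$ from an explicit solution of the inner maximization $(\ref{FSt})$ that defines $F$.

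For convergence I would rely on the properties already recorded for $F$: it is strictly increasing, strictly concave, $C^1$, with $F'(0)=\infty$, and since $\alpha,\alpha_h\in(0,1)$ force diminishing returns one also has $F'(S)\to 0$ as $S\to\infty$. Thus $\beta F'$ is continuous and strictly decreasing from $+\infty$ to $0$, so $\beta F'(S_b)=1$ has a unique root $S_b$. I would then set up the Bellman equation, show the value function is well defined, continuous, strictly increasing and concave and the policy $S_{t+1}=g(S_t)$ single-valued, continuous and nondecreasing, and write the interior Euler equation $u'(c_t)=\beta F'(S_{t+1})u'(c_{t+1})$, which at a rest point collapses to $\beta F'(\bar S)=1$ and hence identifies $\bar S=S_b$. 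Monotone global convergence $S_t\to S_b$ then follows from the usual turnpike argument: monotonicity of $g$ excludes two-cycles, $S_b$ is the unique fixed point, and any path starting below (resp. above) $S_b$ increases (resp. decreases) monotonically toward it.

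The key computational step is to solve $(\ref{FSt})$. Because both $A_cK_c^{\alpha}$ and $wA_hH^{\alpha_h}$ have infinite marginal product at $0$, for $S>0$ the maximizer $(K_c^*,H^*)$ is interior and the constraint binds. Writing $\lambda$ for the multiplier on $pK_c+H\le S$, the first-order conditions are $\alpha A_c(K_c^*)^{\alpha-1}=\lambda p$ and $\alpha_h wA_h(H^*)^{\alpha_h-1}=\lambda$, while the envelope theorem gives $F'(S)=\lambda$. Inverting yields $K_c^*=\big(\alpha A_c/(\lambda p)\big)^{1/(1-\alpha)}$ and $H^*=\big(\alpha_h wA_h/\lambda\big)^{1/(1-\alpha_h)}$. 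At the steady state $\beta F'(S_b)=1$ forces $\lambda=1/\beta$, so substituting into $S_b=pK_c^*+H^*$ produces the closed form
\[
S_b = p\Big(\tfrac{\alpha A_c \beta}{p}\Big)^{\frac{1}{1-\alpha}} + \big(\alpha_h\,wA_h\,\beta\big)^{\frac{1}{1-\alpha_h}}.
\]
From this formula the remaining assertions are immediate. Each of $A_c$, $w$, $A_h$ enters exactly one summand with a strictly positive exponent, so $S_b$ is strictly increasing in each. The no-FDI steady state solves $\beta F_a'(S_a)=1$ with $F_a(S)=A_c(S/p)^{\alpha}$, giving $S_a=p(\alpha A_c\beta/p)^{1/(1-\alpha)}$, which is precisely the first summand of $S_b$; hence $S_b=S_a+\big(\alpha_h wA_h\beta\big)^{1/(1-\alpha_h)}>S_a$ (consistent with $S_a^{1-\alpha}=\alpha\beta A_c/p^{\alpha}$ from the Remark). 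Finally, when $\alpha=\alpha_h$, factoring $(\alpha\beta)^{1/(1-\alpha)}$ out of both terms and using $p(A_c/p)^{1/(1-\alpha)}=(A_c/p^{\alpha})^{1/(1-\alpha)}$ collapses the formula to $S_b=(\alpha\beta A)^{1/(1-\alpha)}$, i.e. $S_b^{1-\alpha}=\alpha\beta A$.

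The main obstacle is the global stability step, where the local steady-state/Euler condition must be upgraded to monotone convergence from an arbitrary $S_0$; this is the part that genuinely requires the concavity and monotonicity machinery of dynamic programming (uniqueness of the rest point, monotonicity of the policy function, and exclusion of cycles). By contrast, once the explicit form of $S_b$ is in hand the comparative statics, the comparison $S_b>S_a$, and the special case $\alpha=\alpha_h$ are purely algebraic and present no real difficulty.
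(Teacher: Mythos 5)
Your proof is correct, and it executes exactly the route the paper gestures at: the paper gives no written proof of this proposition, simply invoking ``the standard argument in dynamic programming'' for the concave reduced-form problem $(P_1')$, which is the Bellman/Euler/turnpike machinery you set up (uniqueness of the root of $\beta F'(S)=1$, monotone policy, exclusion of cycles, monotone convergence). Where you genuinely go beyond the paper is the second half: by inverting the first-order conditions of the inner problem (\ref{FSt}) at the steady-state multiplier $\lambda=1/\beta$, you obtain the closed form $S_b = p\big(\alpha\beta A_c/p\big)^{\frac{1}{1-\alpha}} + \big(\alpha_h\beta w A_h\big)^{\frac{1}{1-\alpha_h}}$ valid for all $\alpha,\alpha_h$, whereas the paper records an explicit formula only in the special case $\alpha=\alpha_h$ and leaves the comparative statics and the inequality $S_b>S_a$ without argument. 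Your decomposition makes those secondary claims immediate: the first summand equals $S_a$ exactly (since $p(\alpha\beta A_c/p)^{\frac{1}{1-\alpha}}=(\alpha\beta A_c/p^{\alpha})^{\frac{1}{1-\alpha}}$, matching the Remark's definition of $S_a$), the second is strictly positive and strictly increasing in $w$ and $A_h$, and the case $\alpha=\alpha_h$ collapses by factoring out $(\alpha\beta)^{\frac{1}{1-\alpha}}$. This is arguably cleaner than deriving monotonicity of $S_b$ abstractly from the dependence of $F'$ on the parameters. The only point worth flagging is minor: the Euler equation at every date requires interior $c_t$ and $S_{t+1}$, which should be justified from $u'(0)=\infty$ and $F'(0)=\infty$; this is standard and implicit in the paper as well.
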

 
The inequality $S_b>S_a$ implies that the presence of the MNE raises the steady-state level of investment relative to an economy without FDI. Moreover, the steady-state level $S_b$ increases with the productivity of the traditional sector, wages, and the MNE’s productivity. This highlights that the growth effects of FDI depend jointly on foreign investment and host-country characteristics, a result consistent with the empirical findings discussed in the Introduction.

\section{Global analysis: role of FDI and R\&D}\label{mainresult}
We now investigate the global dynamics of the allocation to explore the role of FDI. We first provide  static analysis in Subsection \ref{ss_static} and then global dynamic analysis  in Subsection \ref{ss_dynamic}.

\subsection{Static analysis}\label{ss_static}

In this subsection, given the savings $S>0$, we study the optimal allocation of the host country. Formally, we look at the optimization problem $(G_S)$. First, it is easy to see that this problem has a solution.\footnote{However, since the objective function is not concave, the uniqueness of solutions may not be ensured.} Then, we have the following result. 
\begin{proposition}\label{s2} Let Assumption \ref{assumption_basic} be satisfied. 
Consider the optimization problem $(G_S)$ described in \eqref{problemGS}.
\begin{enumerate}
	\item \label{static_1} If $bS^{\sigma}\leq \bar{x}$, then at optimum, we have $N=0$  $\forall a$.
	\item  \label{static_2} If $bS^{\sigma}> \bar{x}$ and $\Big[A_c+{a}\Big(\big(b^{\frac{1}{\sigma}}\frac{S}{2}+\frac{\bar{x}^{\frac{1}{\sigma}}}{2}\big)^{\sigma}-\bar{x}\Big)\Big]\frac{1}{p^{\alpha}}  \Big(\frac{S}{2}-\frac{\bar{x}^{\frac{1}{\sigma}}}{2b^{\frac{1}{\sigma}}}\Big)^{\alpha}>F(S)$, then $N>0$ at optimum.
\item \label{static_3}Let $\lambda$ be higher than 
$\max\{1,2^{\sigma}\}$ and define   $M\equiv \frac{1}{2p^{\alpha}}\big(\frac{1}{2^\sigma}-\frac{1}{\lambda}\big)\big(1-\frac{1}{\lambda^{1/\sigma}}\big)^{\alpha}>0$. Assume also that $\alpha+\sigma>\alpha_h$.

If $S>\bar{S}(a,b)\equiv \max\{\big(\frac{\lambda \bar{x}}{b}\big)^{\frac{1}{\sigma}}, \big(\frac{2A_c}{p_c^{\alpha}abM}\big)^{\frac{1}{\sigma}},\big(\frac{2wA_h}{abM}\big)^{\frac{1}{\sigma+\alpha-\alpha_h}}\}$, then $N>0$ in optimal.
\end{enumerate}
\end{proposition}

\begin{proof} 
See Appendix \ref{appendix-proof}.
\end{proof}
The intuition behind point \ref{static_1} of Proposition \ref{s2} is that the host country may choose not to invest in R\&D if either the efficiency of the research process or the initial resource level is low, or if the fixed cost is high. In addition, points \ref{static_2} and \ref{static_3} indicate that the country invests in R\&D when $a$ and $b$ are sufficiently large (since $F(S)$ depends on neither $a$ nor $b$).

\subsection{Global dynamic analysis}\label{ss_dynamic}
In this section, we explore the global dynamics of the host economy and showing the role of FDI.  First, we have the monotonicity of the savings path $(S_t)$.

\begin{proposition}
\label{basic}Let Assumption \ref{assumption_basic} be satisfied.  The optimal path $(S_t)_t$ is monotonic. Moreover, $S_t$ does not converge to zero.
    
\end{proposition}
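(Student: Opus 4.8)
The plan is to treat $(P')$ as a one-dimensional dynamic program with state $S_t$, feasibility $0\le S_{t+1}\le G(S_t)$ and per-period payoff $u(G(S_t)-S_{t+1})$, and to prove the two assertions separately. First I would record the facts the argument rests on. Since $u$ is strictly increasing, the resource constraint binds, so $c_t=G(S_t)-S_{t+1}$. Since $u'(0)=\infty$, no optimal plan sets $c_t=0$ or $S_{t+1}=0$ (shifting a marginal unit into consumption, resp.\ into saving, strictly raises the objective because $u$ and $G$ both have infinite marginal value at $0$), so the path is interior: $c_t>0$ and $S_t>0$ for all $t$. I would also use, as standard, that the value function $V$ satisfies $V(S)=\max_{0\le y\le G(S)}\big[u(G(S)-y)+\beta V(y)\big]$.

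For monotonicity, I would show that the sign of $S_{t+1}-S_t$ cannot reverse, by a two-period exchange that uses only that $G$ is strictly increasing; it is precisely because this step never invokes concavity of $G$ that the non-concavity of $G$ is not an obstacle here. Fix $t$ and suppose, for contradiction, that $S_t<S_{t+1}$ but $S_{t+2}<S_{t+1}$. Optimality yields two inequalities: comparing the choice $S_{t+1}$ at state $S_t$ with the feasible alternative $S_{t+2}$ (feasible since $S_{t+2}<S_{t+1}\le G(S_t)$), and comparing the choice $S_{t+2}$ at state $S_{t+1}$ with the feasible alternative $S_{t+1}$ (feasible since $S_{t+1}\le G(S_t)<G(S_{t+1})$). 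Adding them, the continuation-value terms cancel and I obtain
\[ u(G(S_t)-S_{t+1})+u(G(S_{t+1})-S_{t+2})\ge u(G(S_t)-S_{t+2})+u(G(S_{t+1})-S_{t+1}). \]
The two arguments on the left are the smallest and the largest of the four, while the two on the right lie strictly between them and have the same total sum; concavity of $u$ forces the reverse inequality (strict if $u$ is strictly concave), a contradiction, so $S_{t+2}\ge S_{t+1}$. The symmetric claim (if $S_t>S_{t+1}$ then $S_{t+2}\le S_{t+1}$) is handled identically, the only new remark being that when $S_{t+1}>G(S_{t+1})$ feasibility already forces $S_{t+2}\le G(S_{t+1})<S_{t+1}$.

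I expect the genuinely delicate point, rather than the exchange itself, to be propagating this to full monotonicity and ruling out a strict increase separated from a later strict decrease by a plateau $S_j=\cdots=S_{j+k}$: on a plateau the two-period exchange degenerates to an equality and gives nothing. I would close this gap either by invoking strict concavity of $u$ (a plateau followed by a drop can then be strictly improved by smoothing consumption), or by the lattice comparison of optimal paths, namely that for $a\le b$ the componentwise minimum and maximum of an optimal path from $a$ and one from $b$ are again feasible and, by supermodularity of $(x,y)\mapsto u(G(x)-y)$, optimal, hence can be taken ordered.

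For non-convergence to zero, suppose $S_t\to 0$. Then $c_t=G(S_t)-S_{t+1}\le G(S_t)\to G(0)=0$, so $c_t\to 0$ and $u'(c_t)\to u'(0)=\infty$. For all large $t$ the state is small enough that $bS_t^{\sigma}\le \bar{x}$, so by Proposition \ref{s2}(i) the optimal R\&D is $N=0$ and near $0$ the technology coincides with $F$, which is smooth, strictly concave and satisfies $F'(0)=\infty$. Interiority ($0<S_{t+1}<G(S_t)$) together with differentiability of $G=F$ at the small value $S_{t+1}$ gives the Euler equation $u'(c_t)=\beta F'(S_{t+1})u'(c_{t+1})$ for all large $t$. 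Since $F'(S_{t+1})\to\infty$, eventually $\beta F'(S_{t+1})>1$, so $u'(c_{t+1})<u'(c_t)$; thus $(u'(c_t))$ is eventually strictly decreasing and hence bounded above, contradicting $u'(c_t)\to\infty$. Therefore $S_t$ cannot converge to zero, the only care needed being the justification of the Euler equation through interiority and the local smoothness of $G$ near $0$, both already established.
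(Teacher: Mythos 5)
Your proposal is correct in substance and follows the same two-part skeleton as the paper's proof, but each half is executed with a different device, so a comparison is worth recording. For monotonicity the paper offers no argument of its own: it invokes the ``standard argument in dynamic programming'' of \citet{amir96}, and the content behind that citation is exactly the exchange/supermodularity argument you spell out (supermodularity of $(x,y)\mapsto u(G(x)-y)$, which needs only $G$ increasing and $u$ concave, never concavity of $G$). The plateau/tie issue you flag is genuine, but note that your first proposed fix is not available under the stated hypotheses: the paper assumes $u$ concave, \emph{not} strictly concave, so the exchange inequality can degenerate to an equality and ties cannot be excluded this way. What the lattice machinery actually delivers in that weak setting is an ascending optimal-policy correspondence (monotone extremal selections), which is your second fix and is precisely what \citet{amir96} provides; so to stay faithful to the paper's assumptions you must take the lattice route. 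Your reconstruction thus makes visible a subtlety that the paper hides inside its citation, rather than introducing a new gap of its own.

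For non-convergence to zero your argument is the same contradiction as the paper's, with one genuine simplification. The paper uses the nonsmooth Euler inequality $\beta u'(c_{t+1})D^-G(S_{t+1})\geq u'(c_t)\geq \beta u'(c_{t+1})D^+G(S_{t+1})$ of \citet{kamiroy07}, valid for the nonconcave, nonsmooth $G$, and concludes that $c_t$ is eventually nondecreasing once $\beta D^+G(S_{t+1})>1$, contradicting $c_t\to 0$. You instead observe that once $S_{t+1}\leq (\bar{x}/b)^{1/\sigma}$, Proposition~\ref{s2}(i) forces $N=0$, hence $G$ coincides with the smooth, strictly concave $F$ on the whole interval $[0,(\bar{x}/b)^{1/\sigma}]$, so the classical interior Euler equation $u'(c_t)=\beta F'(S_{t+1})u'(c_{t+1})$ applies; since $F'(0)=\infty$, eventually $\beta F'(S_{t+1})>1$ and the same contradiction follows. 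This variant is legitimate and more self-contained (it avoids citing the Dini-derivative Euler conditions altogether), provided the two supporting facts you flagged are made explicit: interiority ($c_t>0$ and $S_{t+1}>0$, obtained from $u'(0)=\infty$ together with $G\geq F$ and $F'(0)=\infty$), and the fact that $G=F$ on a full neighborhood of $S_{t+1}$ --- not merely at the point --- which is what makes $G$ differentiable there.
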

\begin{proof}See Appendix \ref{appendix-proof}.\end{proof}

Second, we study the boundedness of the allocation. Let us define the sequence $(x_t)$ as $x_0=X_0, x_{t+1}=F(x_t)$. Let $x^*$ and $\bar{S}$ be uniquely defined by:
\begin{align}\label{sbar}
F(x^*)=x^* \text{ and }
\bar{S}\equiv\ma \{X_0, x^*\}.
\end{align}

Notice that $x^*$ and $\bar{S}$ depend on (i) the productivity $A_c$ and capital elasticity $\alpha$ of the consumption good sector, (ii) the efficiency of specific labor training $A_h, \alpha_h$, and (iii) wage $w$.\footnote{If $\alpha_h=\alpha$, we can explicitly compute that $x^*=({A_c}/{p^{\alpha}})^{\frac{1}{1-\alpha}} +(w A_h)^{\frac{1}{1-\alpha}}.$}

It is important to mention some properties of the function $F$ and the threshold $\bar{S}$.
\begin{lemma}\label{FS}(1) $F(x)\leq F(x^*)= x^*$ $\forall x\leq x^*$ and $F(x)\leq x$ $\forall x\geq x^*$. (2) $x_t<\bar{S}$ $\forall t$
\end{lemma}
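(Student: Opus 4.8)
The plan is to dispatch the two inequalities in part~(1) with monotonicity and concavity, then run two short inductions for part~(2), isolating the one genuinely non-trivial point. For part~(1), the inequality on $x\le x^*$ is immediate: $F$ is strictly increasing (as recorded just after \eqref{FSt}), so $x\le x^*$ gives $F(x)\le F(x^*)=x^*$, using the definition of $x^*$ in \eqref{sbar}. For $x\ge x^*$ I would use strict concavity together with $F(0)=0$: set $\phi(x):=F(x)-x$, which is strictly concave with $\phi(0)=\phi(x^*)=0$ and $\phi'(0)=F'(0)-1=+\infty>0$ by the Inada condition. A strictly concave function vanishing at $0$ and at $x^*$ and increasing at $0$ is positive on $(0,x^*)$ and strictly negative on $(x^*,\infty)$, so $F(x)\le x$ for $x\ge x^*$, with equality only at $x^*$. (Equivalently, $F(0)=0$ and concavity make the average product $F(x)/x$ strictly decreasing, whence $F(x)/x\le F(x^*)/x^*=1$.)

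For the bound $x_t\le\bar S$ in part~(2), I would induct on $t$. The base case is $x_0=X_0\le\max\{X_0,x^*\}=\bar S$. For the step, if $x_t\le\bar S$ then $x_{t+1}=F(x_t)\le F(\bar S)\le\bar S$, using that $F$ is increasing and that $\bar S\ge x^*$, so part~(1) gives $F(\bar S)\le\bar S$. The same two facts show that $(x_t)$ is monotone (approaching $x^*$ from either side), which is convenient but not strictly needed here.

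The substantive claim is $S_t\le x_t$, again by induction. The base case follows from the period-$0$ budget $c_0+S_1\le X_0$, so $S_1\le X_0=x_0$. For the step I would combine feasibility, $S_{t+1}\le G(S_t)$, with the monotonicity of $F$: assuming $S_t\le x_t$,
\[
S_{t+1}\le G(S_t)=F(S_t)\le F(x_t)=x_{t+1},
\]
and with $x_{t+1}\le\bar S$ this closes the induction and yields $S_t\le x_t\le\bar S$. The only real obstacle is the middle equality $G(S_t)=F(S_t)$, i.e.\ showing that R\&D stays inactive along the bounded path. This is exactly where Proposition~\ref{s2}(i) enters: since the inductive hypothesis already gives $S_t\le\bar S$, in the regime where $b\bar S^{\sigma}\le\bar x$ we have $bS_t^{\sigma}\le\bar x$, so the optimal allocation in $(G_S)$ at $S=S_t$ sets $N=0$ and the program collapses to the R\&D-free problem defining $F$ in \eqref{FSt}; hence $G(S_t)=F(S_t)$. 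I expect this reduction — ruling out $G(S)>F(S)$ on $[0,\bar S]$, equivalently dominating the R\&D term $G_0$ by $F$ there — to be the crux, because otherwise the extra output generated by R\&D could push $S_{t+1}$ above $x_{t+1}$; once it is in hand, the monotone-iteration comparison is routine.
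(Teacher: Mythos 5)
Your proposal is correct, and for part (1) and the bound $x_t\le\bar S$ it essentially coincides with the paper's proof: the paper also uses monotonicity for $x\le x^*$ and the average-product argument ($F(0)=0$ plus concavity, so $F(x)/x\le F(x^*)/x^*=1$) for $x\ge x^*$; your induction via $F(\bar S)\le\bar S$ is in fact cleaner than the paper's case split on $X_0\le x^*$ versus $X_0>x^*$. The genuine divergence is in the claim $S_t\le x_t$. The paper's own proof of the Lemma disposes of it with the single sentence ``it is obvious that $S_t\le x_t$ $\forall t$,'' whereas you correctly recognize that it is \emph{not} obvious, precisely because the equilibrium budget constraint involves $G$, not $F$, and $G(S)\ge F(S)$ with strict inequality once R\&D becomes profitable. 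Your resolution --- invoking Proposition \ref{s2}(i) in the regime $b\bar S^{\sigma}\le\bar x$ to force $N=0$, hence $G(S_t)=F(S_t)$ --- is exactly the argument the paper runs, not inside the Lemma's proof, but inside the proof of Proposition \ref{poverty2} (the induction showing $N_T=0$ and $G(S_T)=F(S_T)$). In effect you have merged the Lemma with its downstream application, which is where the real content lies.

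It is worth stressing that some such hypothesis is genuinely needed, so your conditional reading is not a defect but a correction: as stated, with no restriction on $a,b,\bar x$, part (2) is false, since under the conditions of Proposition \ref{dynamic-newtech} the equilibrium has $S_t\to\infty$ while $\bar S=\max\{X_0,x^*\}$ is finite. The Lemma only makes sense under the no-R\&D regime of Proposition \ref{poverty2}, where $X_0\le x^*$ gives $\bar S=x^*$ and your condition $b\bar S^{\sigma}\le\bar x$ is precisely the paper's $b(x^*)^{\sigma}\le\bar x$. One small index slip, which you share with the paper: your base case gives $S_1\le x_0$, while the inductive step as written needs $S_1\le x_1$. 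Either prove the shifted statement $S_{t+1}\le x_t$ (which suffices for $S_t\le\bar S$), or note that in the relevant regime $X_0\le x^*$ the sequence $(x_t)$ is increasing (since $F(x)\ge x$ below $x^*$), so $S_1\le x_0\le x_1$ closes the induction as stated.
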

\begin{proof}See Appendix \ref{appendix-proof}.\end{proof}
We are ready to state the following result.
 
\begin{theorem}[fixed cost and middle income trap] \label{poverty2} Let Assumption \ref{assumption_basic} be satisfied.  Assume that $X_0\equiv A_cK_{c,0}^{\alpha}+w_0L_{e,0}\leq x^*$ and $b(x^*)^{\sigma}\leq \bar{x}$, where $x^*$ is defined by (\ref{sbar}). Then, in optimal, $N_t=0$ $\forall t$. In this case, we have $\lim_{t\rightarrow \infty}S_t=S_b$, which is the steady state investment in the economy with FDI but without R\&D ($S_b$ is defined in Proposition  \ref{noRD}).
\end{theorem}
\begin{proof}See Appendix \ref{appendix-proof}.\end{proof}

Theorem \ref{poverty2} indicates that when the host country has  a low initial resource (in the sense that $X_0\equiv A_cK_{c,0}^{\alpha}+w_0L_{e,0}\leq x^*$) and a high fixed cost (in the sense that $b(x^*)^{\sigma}\leq \bar{x}$), it never invests in R\&D ($N_t=0$ for $\forall t$).\footnote{In Proposition \ref{theorem_nofixedcots_1} in Appendix \ref{proofs_nofixedcost}, we show that $N_t$ may be zero even when there is no fixed cost.} In this case, both savings $S_t$ and output are bounded from above. We refer to this situation as a middle-income trap. 

In Theorem \ref{poverty2}, it is interesting to note that due to the high fixed cost $(\bar{x})$, a middle-income trap arises regardless the level of effectiveness $(a)$ of new technology. The next result shows that a middle-income trap can also arise without any fixed cost: if productivity and R\&D efficiency are sufficiently low, the economy will not experience sustained long-run growth.

\begin{proposition}[efficiency and middle-income trap]
\label{bounded_nofixedcost}
Let Assumption \ref{assumption_basic} be satisfied. Let $\bar{X}>X_0$. Assume that productivity and efficiency (parameters $A_c,A_h$ and $a,b$) are low in the sense that
\begin{align}\label{assumption_lowproductivity}
3wA_h&<\bar{X}^{1-\alpha_h},& 3A_c&<p^{\alpha}\bar{X}^{1-\alpha}, & 3ab\bar{X}^{\alpha+\sigma-1}<p^{\alpha}.
\end{align}
Then, $S_t\leq \bar{X}$ $\forall t\geq 1$.
\end{proposition}
\begin{proof}See Appendix \ref{appendix-proof}.\end{proof}


We now explore conditions under which the economy may grow without bound.
\begin{theorem}[convergence and growth with increasing return to scale] \label{dynamic_newtech}Let Assumptions \ref{assumption_basic} and \ref{assumption_rd} be satisfied. 
Assume also that $\alpha +\sigma \geq 1$.\footnote{Some empirical research likely supports the condition of increasing returns to scale, for example, \cite{shank24} relies on the methods of \cite{olley96} and \cite{ackerberg15} to estimate the aggregate production function of US industries, and the author observes that the US economy exhibits increasing returns to scale over the period 1980-2019. Meanwhile, \cite{romeo17} find evidence of the contribution of the research intensity to productivity growth and the existence of increasing returns to scale.} When $a,b$ are high enough, the optimal path $(S_t)$ converges to infinity: $\lim_{t\rightarrow \infty}S_t=\infty$. Moreover, 
\begin{align}\label{converge_share}
\limm_{t\rightarrow \infty}\frac{N_t}{S_t}=\frac{\sigma}{\alpha +\sigma }, \quad  \limm_{t\rightarrow \infty}\frac{pK_{c,t}}{S_t}=\frac{\alpha}{\alpha +\sigma },\quad  \limm_{t\rightarrow \infty}\frac{H_t}{S_t}=0.
\end{align}

\end{theorem}
\begin{proof}See Appendix \ref{appendixA1}.\end{proof}

The sustained growth (that is the case where  $\lim_{t\rightarrow \infty}S_t=\infty$) in Theorem  \ref{dynamic_newtech} is based on two key points: (1) increasing returns to scale (i.e., $\alpha+\sigma\geq 1$) and (2) a high R\&D efficiency ($b$) and a high effectiveness ($a$) of new technology.

Note that the conditions given in Theorem  \ref{dynamic_newtech} do not depend on the initial resource $X_0\equiv A_cK_{c,0}^{\alpha}+w_0L_{e,0}$ which is less than $x^*$.  So, our theoretical results lead to an interesting implication: Consider a low-income country characterized by condition $bX_0^{\alpha}< \bar{x}$. According to Proposition \ref{s2}, we have $N_1=0$, i.e., the country cannot immediately improve the local firm TFP. Now, suppose that the leverage of new technology $a$ is high enough and conditions in Theorem \ref{dynamic_newtech} hold. In this case, the country obtains a sustained growth (in the sense that $\lim_{t\rightarrow \infty}S_t=\infty$). Moreover, the sequence $S_t$ is increasing in time. According to point (ii) of Proposition \ref{s2}, there is a date $t_0$ along the optimal path such that the country should focus on R\&D from date $t_0$ on (i.e., $N_t=0$ $\forall t\leq t_0$ and $N_t>0$  $ \forall t>t_0$). Therefore, the optimal strategy of the country should be as follows. 
%
%
%
\begin{enumerate}
	\item First, the country should train specific workers. 
	\item Second, specific workers will work for the MNE to improve the country's income and capital accumulation. 
	\item Third, once the country's resource is high enough, it should focus on R\&D to create new technology that increases the country's TFP. Hence, its economy may grow faster and converge to a high-income country.
\end{enumerate}



In Theorem  \ref{dynamic_newtech}, we have assumed that the fixed cost $\bar{x}$ is strictly positive. However, the following result shows that the main results, including the insights of the three-stage development path, remain valid for the case without fixed costs.
\begin{theorem}\label{theorem_nofixedcots} Let Assumption \ref{assumption_basic} be satisfied. Assume that there is no fixed cost ($\bar{x}=0$) and $\sigma\in (0,1)$. If $\alpha+\sigma\geq 1$ and 
\begin{align}\label{highproductivity_nofixedcost}
\min\Big\{\big(\frac{\alpha}{p}\big)^{\alpha}(\sigma ab)^{1-\alpha}A_c^{\alpha},\sigma^{1-\alpha} ab\big(\frac{\alpha}{p}\big)^{\alpha} \Big\}>1/\beta,
\end{align}
then the conclusions of Theorem \ref{dynamic_newtech} hold. In particular, $H_1/S_1>\lim_{t\to\infty}H_t/S_t=0$.
\end{theorem}
\begin{proof}See Appendix \ref{proofs_nofixedcost}.\end{proof}
In the early stages of development, the ratio $H_t/S_t$ is significant, reflecting that investment in training firm-specific labor for the multinational enterprise accounts for a substantial share of total investment. As the economy develops, however, this ratio converges to zero in the long run.

Theorems \ref{dynamic_newtech} and \ref{theorem_nofixedcots} are closely related to the literature on economic growth with increasing returns to scale \citep{romer86, jm90, levan09, levan10}.\footnote{Theorem \ref{theorem_nofixedcots} can be viewed as an extension of Proposition 3 in  \cite{LeVanPham2022}, where they also study sustained growth but abstract from FDI.} Our main contribution is to incorporate foreign direct investment (FDI) into an optimal growth framework and to characterize its role along the development path. In our model, FDI benefits the host country primarily during the early stages of development by facilitating resource accumulation. However, the property $\lim_{t\rightarrow \infty}S_t=\infty$ 
together with condition (\ref{converge_share}), indicates that in the long run—once the host country’s resource base becomes sufficiently large—the economy must shift its focus toward domestic investment in physical capital and R\&D in order to sustain growth, rather than continuing to rely on FDI. In this sense, FDI acts as a transitional growth engine rather than a determinant of long-run growth.


\begin{remark}[growth without FDI]It is worth noting that the conditions in Theorems \ref{dynamic_newtech} and \ref{theorem_nofixedcots} may be satisfied even when $A_e=w=0$. In other words, a host country can achieve long-run economic growth even in the absence of FDI. 

 \end{remark}

 It should be noticed that the condition of increasing returns to scale $(\alpha+\sigma\geq 1)$ plays an important role in growth without bounds. The following result proves that it is actually essential: In the case of decreasing returns to scale, the capital stock may converge to a finite steady state.  

\begin{theorem}[decreasing return to scale]\label{drs}Let Assumption \ref{assumption_basic} be satisfied. 
Assume that $\alpha +\sigma<1$. The optimal path $(S_t)$ converges to a finite value. 
\end{theorem}
\begin{proof}See Appendix \ref{proof_drs}.\end{proof}

Drawing on Theorem \ref{poverty2}, Proposition \ref{bounded_nofixedcost} and Theorem \ref{drs}, we can conclude that the middle-income trap (i.e., when $\limsup_{t\to\infty}S_t<\infty$) is mainly driven by three factors: (1) high fixed costs, (2) low efficiency (captured by low values of $a$ and $b$), and (3) decreasing returns to scale.

\paragraph{Discussion.} So far we have presented several theoretical results regarding the role of FDI in the host economy. In general, host countries benefit from inward FDI. However, the impact of FDI on economic growth depends not only on the nature of FDI itself but, more importantly, on the host country’s underlying conditions, such as its initial resource endowment, the productivity of domestic firms, the quality of the education system, and the efficiency of the R\&D process.

In particular, as shown in Proposition \ref{noRD}, if the host country relies solely on FDI, the steady-state level $S_b$, which exceeds that of an economy without FDI—is increasing in local structural factors, including domestic firm productivity ($A_c$), the efficiency of the training process ($A_h$), and the productivity of the MNE ($A_e$). Moreover, according to Theorems \ref{dynamic_newtech} and \ref{theorem_nofixedcots}, if the host country invests in R\&D and local conditions are sufficiently favorable, it can achieve sustained long-run growth. Importantly, this outcome may arise even in the absence of FDI.

Our conclusion regarding the conditional impact of FDI on economic growth is consistent with a broad empirical literature \citep{borenz98, berth00, li05, alfaro04, alfaro10}, as explained in the introduction. For instance, using data on FDI flows to 69 developing countries over the period 1970–1989, \citet{borenz98} show that FDI contributes more to economic growth than domestic investment, particularly when the host country has reached a minimum threshold of human capital. Similarly, \citet{li05} identify a strong endogenous relationship between FDI and economic growth across 84 countries between the mid-1980s and 1999, highlighting that the growth effects of FDI increase with human capital accumulation but diminish when the technology gap between host and source countries is large. In the same vein, focusing on China between 1985 and 1996, \citet{berth00} find that provinces with higher levels of human capital benefit significantly more from FDI than less-developed regions. More recently, cross-country evidence by \citet{alfaro04} and \citet{alfaro10} confirms that the growth impact of FDI depends critically on domestic absorptive capacity, including financial development, human capital, and institutional quality.

\section{Conclusion}\label{conclusion}
We have investigated the nexus between FDI, R\&D,  and growth in a host country by using infinite-horizon optimal growth models. According to our results, a key question is not whether developing countries should attract inward FDI, but rather how they implement policies to benefit from FDI spillovers. We have proved that FDI can act as a catalyst, helping a host developing country avoid a middle-income trap and potentially achieve higher income. However, to reach sustained economic growth in the long run, the host country should focus on domestic investment and R\&D.

\appendix
\section{Formal proofs}
 \setcounter{equation}{0} 
\numberwithin{equation}{section}


\label{appendix-proof}
\begin{proof}[{\bf Proof of Lemma \ref{wage}}]
At equilibrium, the labor market clears $L^D_{e,t}=L_{e,t}$. By budget constraints \eqref{budgetconstraints}, we have $L_{e,t}<\infty$ $\forall t$. So, $L^D_{e,t}<\infty$. This implies that the wage $w_t>0$  $\forall t$ (otherwise, the profit maximization (\ref{e_MNE}) implies that $L^D_{e,t}=\infty$). Since $\alpha_h\in (0,1)$, we have Inada's condition for the function $A_hH_t^{\alpha_h}$. By combining with $w_t>0$, we must have $H_t>0$ and $L_{e,t}>0$ which in turn implies that $L^D_{e,t}>0$.  Hence, the first-order conditions (FOC) of the problem $(F_t)$ write
\begin{subequations}
\begin{align}\label{wt1}
p_n\alpha_eA_eK_{e,t}^{\alpha_e-1}(L^D_{e,t})^{1-\alpha_e}=p\\
\label{wt2}p_n(1-\alpha_e)A_eK_{e,t}^{\alpha_e}(L^D_{e,t})^{-\alpha_e}=w_t.
\end{align} \end{subequations}
By using \eqref{wt1}, we compute $K_{e,t}/L^D_{e,t}$ as a function of $p_n\alpha_eA_e,p$ and $\alpha_e$. Then we substitute it in \eqref{wt2} in order to compute the wage as shown in \eqref{wt}.
\end{proof}

\begin{proof}[{\bf Proof of Proposition \ref{s2}}] 
Point \ref{static_1} is obvious. Let us prove point \ref{static_2}. Denote $x\equiv bS^{\sigma}-\bar{x}$. Since $x>0$, there exists $ \alpha_n\in (0,1)$ such that  $bS^{\sigma}\alpha_n^{\sigma}=\bar{x}$.  Define  $K_c,N,H$ by 
$ N=(\alpha_n+\epsilon)S, pK_c=\epsilon S, H=0$, where $\epsilon >0$ such that $\alpha_n+2\epsilon=1$ (so that $N+pK_c=S$). Precisely, $\epsilon=\frac{1}{2} \Big(1-\big(\frac{\bar{x}}{bS^{\sigma}}\big)^{\frac{1}{\sigma}} \Big)$.
With such $N,K_c$, we have $bN^{\sigma}>\bar{x}$, and hence, we can verify that
\begin{align}
g(K_c,N,H)&=\Big[A_c+{a}\Big(\big(b^{\frac{1}{\sigma}}\frac{S}{2}+\frac{\bar{x}^{\frac{1}{\sigma}}}{2}\big)^{\sigma}-\bar{x}\Big)\Big]\frac{1}{p^{\alpha}}  \Big(\frac{S}{2}-\frac{\bar{x}^{\frac{1}{\sigma}}}{2b^{\frac{1}{\sigma}}}\Big)^{\alpha}
\end{align}
$g(K_c,N,H)$ is increasing in $a$ and $b$. It will be higher than $F(S)$ when $a$ and $b$ are high enough because $F(S)$ does not depend on $(a,b)$. 

Point \ref{static_3}. Under our assumption, we have $bS^{\sigma}>\lambda \bar{x}$. This implies that $\bar{x}<bS^{\sigma}/\lambda$ and $\bar{x}^{1/\sigma}<Sb^{1/\sigma}/\lambda^{1/\sigma}$. Thus, 
\begin{align}
&\Big[A_c+{a}\Big(\big(b^{\frac{1}{\sigma}}\frac{S}{2}+\frac{\bar{x}^{\frac{1}{\sigma}}}{2}\big)^{\sigma}-\bar{x}\Big)\Big]\frac{1}{p^{\alpha}}  \Big(\frac{S}{2}-\frac{\bar{x}^{\frac{1}{\sigma}}}{2b^{\frac{1}{\sigma}}}\Big)^{\alpha}\\
\geq &{a}\Big(b\frac{S^{\sigma}}{2^{\sigma}}-\frac{b}{\lambda}S^{\sigma}\Big) \frac{1}{p^{\alpha}}  \Big(\frac{S}{2}-\frac{S}{\lambda^{1/\sigma}}\Big)^{\alpha}=abS^{\alpha+\sigma}M.
\end{align}
By the definition of $F(S)$, we have $F(S)\leq A_c(\frac{S}{p_c})^{\alpha}+wA_hS^{\alpha_h}$. Our assumption $S>\bar{S}(a,b)$ ensures that $\frac{1}{2}abMS^{\alpha+\sigma}>  A_c(\frac{S}{p_c})^{\alpha}$ and $\frac{1}{2}abMS^{\alpha+\sigma}>  wA_hS^{\alpha_h}$, which imply that $abMS^{\alpha+\sigma}>F(S)$. So,  applying point \ref{static_2}, we get $N>0$. 
\end{proof}

\begin{proof}[{\bf Proof of Proposition \ref{basic}}] Since the function $G(\cdot)$ is continuous, strictly increasing, by using the standard argument in dynamic programming  \citep*{amir96},  we obtain that the optimal path $(S_t)_t$ is monotonic. 

We now prove that $S_t$ does not converge to zero. According to Lemma 3 in \cite{NguyenPham2024} (or Lemma 3.6 in \cite{kamiroy07}), we  have the Euler conditions in the form of inequality \begin{eqnarray}\label{euler1} 
 \beta u'(c_{t+1})D^-G(S_{t+1}) \geq u'(c_t)\geq \beta u'(c_{t+1})D^+G(S_{t+1}).
\end{eqnarray}
where the  Dini derivatives of function $G$ are defined by $D^+G(x)=\limsupp_{\epsilon \downarrow 0}\frac{G(x+\epsilon)-G(x)}{\epsilon}$ and $D^-G(x)=\liminff_{\epsilon \downarrow 0}\frac{G(x)-G(x-\epsilon)}{\epsilon}$.

Suppose that $\lim_{t\to \infty}S_t=0$. According to budget constraints and the fact that $G(0)=0$, we have $\lim_{t\to \infty}c_t=0$. Since $\lim_{t\rightarrow +\infty}S_t=0$, there exists $t_0$ such that $\beta D^+G(S_{t+1})>1$ for every $t\geq t_0$. Consequently, $u'(c_t)\geq u'(c_{t+1})$ and hence $c_t\leq c_{t+1}$ for every $t\geq t_0$. This leads to a contradiction to the fact that $\lim_{t\rightarrow +\infty}c_t=0$.
\end{proof}

\begin{proof}[{\bf Proof of Lemma \ref{FS}}]
(1) If $x<x^*$, then $F(x)<F(x^*)=x^*$. If $x>x^*$, then $\frac{F(x)}{x}\leq \frac{F(x^*)}{x^*}=1$ (because  $F$ is concave).

(2) It is obvious that $S_t\leq x_t $ $ \forall t$. We prove 
 $x_t\leq \bar{S}$  $\forall t$ by using the induction argument. First, we see that $x_0\leq \bar{S}.$ Second, assume that $x_s\leq \bar{S}$ $\forall s\leq t$. If $X_0\leq x^*$, then  $x_t\leq \bar{S}=x^*$, then $x_{t+1}=F(x_t)\leq F(x^*)=x^*= \bar{S}$. If $X_0> x^*$,  then $x_t\leq \bar{S}=X_0$ and hence $x_{t+1}=F(x_t)=F(x_0)\leq x_1\leq \bar{S}$.
 
\end{proof}

\begin{proof}[{\bf Proof of Theorem \ref{poverty2}}]
 We will prove, by induction argument, that $b\bar{x}_t^{\sigma}\leq \bar{x}$ and $S_t\leq x_1$ $\forall t\geq 1$. When $t=1$, we have $N_1\leq S_1\leq X_0\leq x_1$, So, $bN_1^{\sigma}\leq b\bar{S}_1^{\sigma}\leq \bar{x}$.  Assume that $b\bar{x}_t^{\sigma}\leq \bar{x}$ and $S_t\leq x_1$ $\forall t\leq T$. This implies that $N_T=0$, because otherwise we can reduce $N_T$ and augment $K_{c,T}$ in order to get a higher utility, which is a contradiction.

Since $N_T=0$, we have that $G(S_T)=F(S_T)$. Since $S_T\leq x_1$, we have $F(S_T)\leq F(x^*)=x^*$. Hence, $S_{T+1}\leq G(S_T)\leq x^*$ and therefore $b\bar{x}_{T+1}^{\sigma}\leq bS_{T+1}^{\sigma} \leq b(x^*)^{\sigma}\leq \bar{x}$. We have finished our proof.

\end{proof}

\begin{proof}[{\bf Proof of Proposition \ref{bounded_nofixedcost}}]We use the induction argument. 
Since $X_0<\bar{X}$, we have $S_1\leq c_0+S_1\leq X_0\leq \bar{X}$. So, the claim holds for $t=1$. Suppose that it holds until $t$, i.e., $S_{j}\leq \bar{X}$ $\forall j\leq t$. We have $S_{t+1}\leq G(S_t)\leq G(\bar{X})$.
By the definition of the function $G$ and condition \eqref{assumption_lowproductivity}, we have
\begin{align}
G(\bar{X})&\leq (A_c+ab\bar{X}^{\sigma})\big(\frac{\bar{X}}{p}\big)^{\alpha}+wA_hS^{\alpha_h}=\frac{A_c}{p^{\alpha}}\bar{X}^{\alpha}+\frac{ab}{p^{\alpha}}\bar{X}^{\sigma+\alpha}+wA_h\bar{X}^{\alpha_h}\\
&\leq \frac{\bar{X}}{3}+\frac{\bar{X}}{3}+\frac{\bar{X}}{3}=\bar{X}.
\end{align}
So, $S_{t+1}\leq \bar{X}$. We have finished our proof.
\end{proof}

\subsection{Proof of Theorems \ref{dynamic_newtech}}
\label{appendixA1}
The proof is quite complicated. We proceed in several steps.

\begin{lemma}\label{s4} Let Assumption \ref{assumption_basic}  be satisfied.  Assume that $\alpha+\sigma\geq 1$. For any solution $(K_c, N, H)$ of the problem $(G_S)$, denote $\theta_c\equiv \frac{pK_{c}}{S},  \theta_n\equiv \frac{N}{S}, \theta_h \equiv \frac{H}{S}.$ Then, we have 
\begin{align}\label{ratio}
\limm_{S\rightarrow \infty}\theta_c&=\frac{\alpha}{\alpha +\sigma },& \limm_{S\rightarrow \infty}\theta_n&=\frac{\sigma}{\alpha +\sigma }, &\limm_{S\rightarrow \infty}\theta_h&=0.
\end{align}
\end{lemma}


\begin{proof}[{Proof of Lemma \ref{s4}}]Let $S$ be high enough so that conditions in  point \ref{static_3} of Proposition \ref{s2} holds. In this case, we have $bN^{\sigma}-\bar{x}>0$ at optimal. It is easy to see that $\theta_c, \theta_h>0$. By consequence, we can write FOCs for the problem  $(G')$  as follows (we have FOCs even the objective function is not concave):
\begin{align}
\label{foc1}\alpha_h w A_h S^{\alpha_h}\theta_h^{\alpha_h-1}&=\lambda\\
\label{foc2}\Big(A_c+a(bS^{\sigma}\theta_n^{\sigma}-\bar{x})^+\Big)\frac{\alpha}{p^{\alpha}}\theta_c^{\alpha-1}S^{\alpha}&=\lambda\\
\label{foc3}ab\sigma S^{\sigma+\alpha}\theta_n^{\sigma-1} \big(\frac{\theta_{c}}{p}\big)^{\alpha}&=\lambda
\end{align}
where $\lambda$ is the multiplier associated to  the constraint $\theta_c+\theta_n+\theta_h\leq 1.$ Conditions \eqref{foc1} and \eqref{foc3} imply that 
\begin{align}
\frac{\alpha_h w A_hp^{\alpha}}{ab\sigma} = S^{\sigma+\alpha-\alpha_h}\theta_n^{\sigma-1} \theta_{c}^{\alpha}\theta_h^{1-\alpha_h}= (S\theta_n)^{\sigma-1} (S\theta_{c})^{\alpha}(S\theta_h)^{1-\alpha_h}
\label{stheah}
\end{align}
while \eqref{foc2} and \eqref{foc3} imply that  
$\Big(A_c+a(bS^{\sigma}\theta_n^{\sigma}-\bar{x})^+\Big)\alpha=
ab\sigma S^{\sigma}\theta_n^{\sigma-1} \theta_c.$ 
By  consequence, we obtain
\begin{align}\label{cn}
\theta_c=\frac{\alpha }{\sigma}\theta_n + \frac{\alpha \theta_n^{1-\sigma}(A_c-a\bar{x})}{ab\sigma S^{\sigma}}\text{ and }\frac{S\theta_c}{S\theta_n}=\frac{\alpha}{\sigma} + \frac{\alpha (A_c-a\bar{x})}{\sigma ab (S\theta_n)^{\sigma}}.
\end{align}
From this, we get $\lim_{S\rightarrow \infty}(\frac{\sigma\theta_c}{\alpha\theta_n}-1)\theta_n^{\sigma}=0$. This means $\lim_{S\rightarrow \infty}(\frac{\sigma\theta_c}{\alpha}-\theta_n)\theta_n^{\sigma-1}=0$. Since $\theta_n\in (0,1)$ and $\sigma\in (0,1]$, we have  $\theta_n^{\sigma-1}\geq 1$. By consequence, we obtain $\lim_{S\rightarrow \infty}(\theta_c-\frac{\alpha }{\sigma}\theta_n)=0$.


We will prove that when $S$ tends to infinity, $S\theta_h$ is bounded from above, and hence $\lim_{S\rightarrow \infty}\theta_h=0$. To do so, we firstly prove that $\liminf_{S\rightarrow \infty}\frac{(S\theta_c)^{\alpha}}{(S\theta_n)^{1-\sigma}}>0$. Indeed, according to (\ref{cn}), we have 
\begin{align}
\frac{(S\theta_c)^{\alpha}}{(S\theta_n)^{1-\sigma}}=
(S\theta_n)^{\alpha+\sigma-1} \Big(\frac{\alpha}{\sigma} + \frac{\alpha (A_c-a\bar{x})}{\sigma ab (S\theta_n)^{\sigma}}\Big)^{\alpha}
\end{align}
Suppose that there is an increasing sequence $(S_k)_k$ tending to infinity such that $\lim_{k\rightarrow \infty}\frac{(S_k\theta_c)^{\alpha}}{(S_k\theta_n)^{1-\sigma}}=0$. Notice that 
\begin{align*}
\frac{(S\theta_c)^{\alpha}}{(S\theta_n)^{1-\sigma}}=
\frac{1}{(S\theta_n)^{(1-\sigma)(1-\alpha)}} \Big(\frac{\alpha}{\sigma ab}\big[A_c + a(b(S\theta_n)^{\sigma}-\bar{x})\big]\Big)^{\alpha}\geq 
\frac{1}{(S\theta_n)^{(1-\sigma)(1-\alpha)}} \Big(\frac{\alpha}{\sigma ab}A_c\Big)^{\alpha}
\end{align*} for  $S$ high enough (because $b(S\theta_n)^{\sigma}>N$ for $S$ high enough). By combining with $\lim_{k\rightarrow \infty}\frac{(S_k\theta_c)^{\alpha}}{(S_k\theta_n)^{1-\sigma}}=0$, we have $\lim_{k\rightarrow \infty}S_k\theta_n=\infty$. However, this will follow that 
\begin{align}
\frac{(S_k\theta_c)^{\alpha}}{(S_k\theta_n)^{1-\sigma}}=
(S_k\theta_n)^{\alpha+\sigma-1} \Big(\frac{\alpha}{\sigma} + \frac{\alpha (A_c-a\bar{x})}{\sigma ab (S_k\theta_n)^{\sigma}}\Big)^{\alpha}
\end{align}
is bounded away from zero (because $\alpha+\sigma\geq1$), a contradiction. So, we get that $\liminf_{S\rightarrow \infty}\frac{(S\theta_c)^{\alpha}}{(S\theta_n)^{1-\sigma}}>0$. By combining this with 
(\ref{stheah}), we obtain that $S\theta_h$ is bounded from above and hence $\lim_{S\rightarrow \infty}\theta_h=0$.  Combining with (\ref{cn}), we obtain (\ref{ratio}).

\end{proof}

%
\begin{lemma}\label{G_F}
If $G(x)=F(x)$, then we have $D^+G(x)\geq F'(\bar{S}(a,b))>1/\beta$ for $a,b$ high enough (recall that $\bar{S}(a,b)$ is defined in point \ref{static_3} of  Proposition \ref{s2}).
\end{lemma}
\begin{proof}
If $G(x)=F(x)$, we have 
\begin{align}
D^+G(x)&=\limsupp_{\epsilon \downarrow 0}\frac{G(x+\epsilon)-G(x)}{\epsilon}=\limsupp_{\epsilon \downarrow 0}\frac{G(x+\epsilon)-F(x)}{\epsilon}\\
&\geq \limsupp_{\epsilon \downarrow 0}\frac{F(x+\epsilon)-F(x)}{\epsilon}=F'(x).
\end{align}
According to point \ref{static_3} of  Proposition \ref{s2}, $G(x)=F(x)$ implies that $x<\bar{S}(a,b)$. Since $F'$ is decreasing, we have $F'(x)>F'(\bar{S}(a,b))>1/\beta$ for $a,b$ high enough (because $F'(0)=\infty$; see Lemma \ref{function_F}). By consequence, $D^+G(x)>1/\beta$ for   $a,b$ high enough.
    
\end{proof}


\begin{lemma}\label{DGS}
Assume that $a\bar{x}-A_c\geq 0$.  Consider a point $S>0$ satisfying  $G(S)>F(S)$. We have 
$D^+G(S)\geq \Gamma(a,b,\bar{x}) \equiv a^{1-\alpha}b^{\frac{1-\alpha}{\sigma}} \sigma \bar{x}^{\frac{\sigma+\alpha-1}{\sigma}}\Big(\frac{\alpha A_c}{p\sigma \bar{x}}\Big)^{\alpha}.$ 
By consequence, when $a$ and $b$ are high enough, we have $\beta D^+G(S)>1$ $\forall S>0$  satisfying  $G(S)>F(S)$.
\end{lemma}

\begin{proof}[Proof of Lemma \ref{DGS}]

Our proof consists of two steps.\\
{\bf Step 1.}  Let $(K_c,N,H)$ be such that $G(S)= (A_c+a(bN^{\sigma}-\bar{x})^+)K_{c}^{\alpha}+wA_hH^{\alpha_h}$. Since $G(S)>F(S)$, we have $bN^{\sigma}> \bar{x}$ at optimal. It is obvious that $K_c>0$ and $H>0$ in optimal. So, this solution is interior. By consequence, we can use the Lagrangian method. Let $\lambda$ be the multiplier associated to the constraint $pK_c+N+H\leq S$, we have FOCs
\begin{align}\label{FOC_GS}
(abN^{\sigma}-(a\bar{x}-A_c)) \alpha K_c^{\alpha-1}&=p \lambda, &
ab\sigma N^{\sigma-1} K_c^{\alpha}&=\lambda,&
\alpha_h w A_h H^{\alpha_h-1}&=\lambda.
\end{align}

FOCs imply that $\alpha (abN^{\sigma}-(a\bar{x}-A_c)) =p ab\sigma N^{\sigma-1} K_c$ and hence
\begin{align}
\frac{\alpha}{\sigma}N\geq pK_c&=\frac{\alpha}{\sigma}N \Big(1-\frac{a\bar{x}-A_c}{abN^{\sigma}} \Big)>N\frac{\alpha A_c}{\sigma a\bar{x}}
\end{align}
because $a\bar{x}-A_c\geq 0$ and $bN^{\sigma}\geq \bar{x}$. By consequence, we have
\begin{align}\label{K_cN}
\frac{K_c}{N}> \frac{\alpha A_c}{p\sigma a\bar{x}}.
\end{align}

{\bf Step 2.}  We rewrite $G(S)= (abN^{\sigma}-d)K_{c}^{\alpha}+wA_hH^{\alpha_h}$, where  $d\equiv a\bar{x}-A_c\geq 0$.

Denote $\epsilon_1\equiv \epsilon/(p+2).$ Since $pK_c+N+H=S$, we have $p(K_c+\epsilon_1)+(N+\epsilon_1)+(H+\epsilon_1)=S+\epsilon$. Then, by the definition of $G(S+\epsilon)$, we have
\begin{align*}
G(S+\epsilon)&\geq (ab(N+\epsilon_1)^{\sigma}-d)(K_{c}+\epsilon_1)^{\alpha}+wA_h(H+\epsilon_1)^{\alpha_h}.
\end{align*}
This implies that $G(S+\epsilon)-G(S)
\geq (ab(N+\epsilon_1)^{\sigma}-d)(K_{c}+\epsilon_1)^{\alpha}+wA_h(H+\epsilon_1)^{\alpha_h}-(abN^{\sigma}-d)K_{c}^{\alpha}-wA_hH^{\alpha_h}.$ 
By dividing both sides by $\epsilon$ and using  $\epsilon=\epsilon_1(p+2)$, we have
\begin{align*}
\frac{G(S+\epsilon)-G(S)}{\epsilon}
\geq &\frac{ab}{p+2}\frac{(N+\epsilon_1)^{\sigma}-N^{\sigma}}{\epsilon_1}(K_{c}+\epsilon_1)^{\alpha}\\
&+\frac{abN^{\sigma}-d}{p+2}\frac{(K_{c}+\epsilon_1)^{\alpha}-K_{c}^{\alpha}}{\epsilon_1}+\frac{wA_h}{p+2}\frac{(H+\epsilon_1)^{\alpha_h}-H^{\alpha_h}}{\epsilon_1}.
\end{align*}
Let $\epsilon$ decrease to $0$ and by using \eqref{FOC_GS}, we obtain
\begin{align}
D^+G(S)\geq ab\sigma N^{\sigma-1}K_{c}^{\alpha}\Big(\frac{1}{p+2}+\frac{p}{p+2}+\frac{1}{p+2}\Big)=ab\sigma N^{\sigma-1}K_{c}^{\alpha}.
\end{align}
By combining this with $bN^{\sigma}> \bar{x}$ and condition \eqref{K_cN}, we get that
\begin{align}
D^+G(S)&\geq \sigma abN^{\sigma-1}K_{c}^{\alpha}= \sigma abN^{\sigma+\alpha-1}\big(\frac{K_{c}}{N}\big)^{\alpha} >\sigma ab\big(\frac{\bar{x}}{b}\big)^{\frac{\sigma+\alpha-1}{\sigma}}\Big(\frac{\alpha A_c}{p\sigma a \bar{x}}\Big)^{\alpha}\\
&=a^{1-\alpha}b^{\frac{1-\alpha}{\sigma}} \sigma \bar{x}^{\frac{\sigma+\alpha-1}{\sigma}}\Big(\frac{\alpha A_c}{p\sigma \bar{x}}\Big)^{\alpha}.
\end{align}
Thus, it is easy to see that $D^+G(S)>1/\beta$ when $a,b$ are high enough.

\end{proof}

We are now ready to prove Theorem \ref{dynamic_newtech}. When $a$ is high enough, we have $a\bar{x}-A_c>0$. According to Lemmas \ref{G_F} and \ref{DGS}, we have $\beta D^+G(S)\geq \beta \min\{F'(\bar{S}(a,b)), \Gamma (a,b,\bar{x})\}$ $\forall S>0$. So, $\beta D^+G(S)>1$ when $a$ and $b$ are high enough.  According to Proposition 4.6 in  \citet*{kamiroy07}, we get that $\lim_{t\rightarrow \infty}S_t=\infty$. According to Lemma \ref{s4}, we obtain \eqref{converge_share} in Theorem \ref{dynamic_newtech}.

\subsection{Proof of Theorem \ref{theorem_nofixedcots} and an additional result}
\label{proofs_nofixedcost}
We need intermediate steps.

\begin{lemma}\label{function_G1} Consider the function $G_1: \rr^+\to\rr^+$ defined by $G_1(x)\equiv \max\{(A_c+ab N^{\sigma})K_c^{\alpha}: pK_c+N\leq x, N\geq 0, K_c\geq 0\}$. Let $x>0$ and $(K_c,N)$ be a solution to this maximization problem.

\begin{enumerate}
\item If $\sigma\geq 1$ and $\sigma ab x^{\sigma}\leq \alpha A_c$, then $N=0$.
\item If $\sigma\in (0,1)$, then $N>0$ and it is a unique solution to $(\alpha+\sigma)ab N+\alpha A_cN^{1-\sigma}=\sigma ab x.$
\end{enumerate}
\end{lemma}

\begin{proof}

We also observe that $(K_c,N)$ is a solution to the following problem: $\max\{\ln (A_c+abN^{\sigma})+\alpha \ln(K_c): pK_c+N\leq x, (K_c,N)\in \rr^2_+\}$.

\paragraph{Case 1: $\sigma\geq 1$ and $\sigma ab x^{\sigma}\leq \alpha A_c$.} If $N>0$, then we have the FOC
\begin{align}\label{G1_N_equation}
\frac{\sigma ab N^{\sigma-1}}{A_c+ab N^{\sigma}}=\frac{\alpha}{x-N}\iff (\alpha+\sigma)abN+\alpha A_cN^{1-\sigma}=\sigma ab x.
\end{align}

Since $\sigma\geq 1$ and $N\leq x$, we have $N^{1-\sigma}\geq x^{1-\sigma}$. Combining with $N>0$, we obtain $\sigma ab x> \alpha A_cS^{1-\sigma}$, or, equivalently, $\sigma ab x^{\sigma}>\alpha A_c$, a contradiction. So, we conclude that $N=0$.

\paragraph{Case 2: $\sigma\in (0,1)$.}   We observe that $\lim_{N\to 0}\frac{\partial \ln (A_c+abN^{\sigma})}{\partial N}=\infty$. By consequence, $N>0$ in optimal. Then, by the same argument,  $N$ is the solution to the equation \eqref{G1_N_equation}.

\end{proof}

\begin{lemma}\label{nofixedcots_intermediate1}
Assume that $\bar{x}=0$. 
Let $(K_c,N,H)$ be a solution to the problem $(G_S)$ described in \eqref{problemGS}.
\begin{enumerate}
\item Let $\sigma\geq 1$. If $\sigma abS^{\sigma}\leq \alpha A_c$, then $N=0$ and $G(S)=F(S)$.
\item Let $\sigma<1$. In optimal, we have $N>0$.
\end{enumerate} 
\end{lemma}
\begin{proof}
Let $(K_c,N,H)$ be a solution. Then, we have $K_c,H>0$ thanks to Inada's condition. Then $(K_c,N$ is a solution to the problem $\max\{(A_c+ab N^{\sigma})K_c^{\alpha}: pK_c+N\leq S-H, N\geq 0, K_c\geq 0\}$. Then applying Lemma \ref{function_G1}, we obtain our results.

\end{proof}

\begin{lemma}\label{DGS}
Assume that $\bar{x}=0$, $\sigma\in (0,1)$ and \eqref{highproductivity_nofixedcost} holds. Then, we have $D^+G(S)>1/\beta$ $\forall S>0$. By consequence, when $a$ and $b$ are high enough, we have $\beta D^+G(S)>1$ $\forall S>0$.
\end{lemma}

\begin{proof}

Let $(K_c,N,H)$ be such that $G(S)= (A_c+abN^{\sigma})K_{c}^{\alpha}+wA_hH^{\alpha_h}$. It is obvious that $K_c>0$ and $H>0$ in optimal. According to Lemma \ref{nofixedcots_intermediate1}, we have $N>0$. So, this solution $(K_c,N,H)$ is interior. By consequence, we can use the Lagrangian method. Let $\lambda$ be the multiplier associated to the constraint $pK_c+N+H\leq S$, we have FOCs
\begin{align}\label{FOC_GS}
(A_c+abN^{\sigma}) \alpha K_c^{\alpha-1}&=p \lambda, &
ab\sigma N^{\sigma-1} K_c^{\alpha}&=\lambda,&
\alpha_h w A_h H^{\alpha_h-1}&=\lambda.
\end{align}
The two first conditions imply
\begin{align} \label{KcN}
\alpha (A_c+abN^{\sigma}) =p ab\sigma N^{\sigma-1} K_c, \quad 
K_c&\geq N\frac{\alpha }{p\sigma}.
\end{align}

By using the same argument in the second step of the proof of Lemma \ref{DGS}, we have 
\begin{align}\label{boundDG}
D^+G(S)&\geq \sigma abN^{\sigma-1}K_{c}^{\alpha} \text{ and } D^+G(S)\geq \frac{1}{p}(A_c+abN^{\sigma})\alpha K_{c}^{\alpha-1}.
\end{align}


We consider two cases.
\begin{enumerate}
    \item If $N\leq 1$, then \eqref{boundDG} and \eqref{KcN} give that
\begin{subequations}
    \begin{align}
D^+G(S)&\geq \frac{\alpha}{p}\frac{(A_c+abN^{\sigma})}{K_{c}^{1-\alpha}}=\frac{\alpha}{p}\frac{(A_c+abN^{\sigma})}{\big(\frac{\alpha }{p\sigma ab }(A_c+abN^{\sigma})N^{1-\sigma}\big)^{1-\alpha}}\\
&=\big(\frac{\alpha}{p}\big)^{\alpha}(\sigma ab)^{1-\alpha}\frac{(A_c+abN^{\sigma})^{\alpha}}{N^{(1-\alpha)(1-\sigma)}}\geq \big(\frac{\alpha}{p}\big)^{\alpha}(\sigma ab)^{1-\alpha}A_c^{\alpha}>1/\beta,
\end{align}
\end{subequations}
where the last inequality is actually our assumption \eqref{highproductivity_nofixedcost}.
\item If $N>1$, by using \eqref{boundDG}, \eqref{KcN}, and our assumption \eqref{highproductivity_nofixedcost}, we have
\begin{subequations}
    \begin{align}
D^+G(S)&\geq \sigma abN^{\sigma-1}K_{c}^{\alpha}= \sigma abN^{\alpha+\sigma-1}\big(\frac{K_{c}}{N}\big)^{\alpha}\\
&\geq \sigma abN^{\alpha+\sigma-1}\big(\frac{\alpha}{p\sigma}\big)^{\alpha}\geq \sigma ab\big(\frac{\alpha}{p\sigma}\big)^{\alpha}=\sigma^{1-\alpha} ab\big(\frac{\alpha}{p}\big)^{\alpha}>1/\beta.
\end{align}
\end{subequations}
\end{enumerate}
\end{proof}

\begin{proof}[{\bf Proof of Theorem \ref{theorem_nofixedcots}}]
Observe that Lemma \ref{s4} remains valid when there is no fixed costs ($\bar{x}=0$). Notice also that points  \ref{static_2} and \ref{static_3} of Proposition \ref{s2} still hold.  

By Lemma \ref{DGS}, we have $\beta D^+G(S)>1$ $\forall S>0$. So, applying  Proposition 4.6 in  \citet*{kamiroy07}, we get that $\lim_{t\rightarrow \infty}S_t=\infty$. Then, according to Lemma \ref{s4}, we obtain \eqref{converge_share} in Theorem \ref{dynamic_newtech}.  

Observe that, since $\sigma\in (0,1)$, by Lemma \ref{nofixedcots_intermediate1}, we have $N_t>0$ $\forall t$.
\end{proof}

\begin{proposition}[additional result]
\label{theorem_nofixedcots_1}Let the conditions in Assumption \ref{assumption_basic} hold, except that the restriction $\sigma \in (0,1]$ is replaced by $\sigma \geq 1$. Assume that there is no fixed cost ($\bar{x}=0$). If $X_0\leq x^*$ and $\sigma ab(x^*)^{\sigma}\leq \alpha A_c$. Then, the conclusions in Theorem \ref{poverty2} holds. Precisely,  in optimal, $N_t=0$ $\forall t$ and $\lim_{t\rightarrow \infty}S_t=S_b$. 
\end{proposition} 
\begin{proof}[{\bf Proof of Proposition \ref{theorem_nofixedcots_1}}]   
 We have $S_1\leq X_0\leq x^*$. This implies that $\sigma abS_1^{\sigma}\leq \alpha A_c$. By Lemma \ref{nofixedcots_intermediate1}, we have $N_1=0$ and $G(S_1)=F(S_1)$. 
Then, $S_2\leq G(S_1)=F(S_1)\leq F(x^*)=x^*$. By induction, we have $N_t=0$, $G(S_1)=F(S_t)$ and $S_t\leq x^*$ $\forall t\geq 1$. This implies that $\lim_{t\to \infty}S_t=S_b$.
\end{proof}

\subsection{Proof of Theorem \ref{drs}}
\label{proof_drs}
Recall that $ c_{t}+S_{t+1}\leq G(S_t)$ $\forall t\geq 1$ and $c_0+S_1\leq X_0$. So, $S_{t+1}\leq G(S_t)$ $\forall t\geq 1$ and $S_1\leq X_0$.
By the definition of the function $G$ in \eqref{problemGS}, we observe that 
\begin{align*}
G(S)&\leq (A_c+abS^{\sigma})\frac{1}{p^{\alpha}} S^{\alpha}+wA_hS^{\alpha_h}
\leq \begin{cases}
 \frac{A_c+ab}{p^{\alpha}}+wA_h \text{ if } S\leq 1\\
 \Big(\frac{A_c+ab}{p^{\alpha}}+wA_h\Big) S^{\max\{\alpha+\sigma, \alpha_h\}}  \text{ if } S\geq 1.
\end{cases}
\end{align*}
By using $\max(\alpha+\sigma, \alpha_h)<1$, we can prove that $S_t$ is bounded from above (see, for instance, Lemma 1 in \cite{LeVanPham2016}). Indeed, denote $A\equiv {(A_c+ab)}/{p^{\alpha}}+wA_h$ and $\gamma\equiv \max\{\alpha+\sigma, \alpha_h\}$.  Since $\gamma\in (0,1)$, we can take $\bar{X}$ be such that  $\bar{X}>\max\{1,A,X_0\}$ and $A\bar{X}^{\gamma-1}<1$.  We claim that $S_t\leq \bar{X}$ $\forall t$. 

We have $S_1\leq \bar{X}$. If $S_1\leq 1$, then $S_2\leq G(S_1)\leq A\leq \bar{X}$.  If $1<S_{1}\leq \bar{X}$, then $S_2\leq G(S_1)\leq AS_1^{\gamma}\leq A \bar{X}^{\gamma}\leq\bar{X}$. 
Combining two cases, we have $S_2\leq \bar{X}$.
 By induction, we obtain $S_t\leq \bar{X}$ $\forall t$.

By Proposition \ref{basic}, the sequence $S_t$ is monotonic. By consequence, it must converge to a finite value. 



{\small
\bibliographystyle{apalike}

\bibliography{refs}
}

\end{document}